\documentclass[preprint,3p]{elsarticle} %\documentclass[preprint,3p]{elsarticle}
\usepackage{amssymb}
\usepackage{amsfonts}
\usepackage{amsmath}
\usepackage{amsthm}
\usepackage{algorithmic}
\usepackage{algorithm}
\usepackage{booktabs}
\usepackage{url}
\usepackage{multirow}
\usepackage{IEEEtrantools}
\usepackage{graphicx,color}
\usepackage[bf,SL,BF]{subfigure}
\usepackage{natbib}\biboptions{authoryear}
\usepackage{color,xcolor}
\usepackage[colorlinks, citecolor=blue, linkcolor=black]{hyperref}
%\usepackage{ctex}
%\usepackage{fontspec}
%\setmainfont{Times New Roman}
\usepackage{enumerate}
\usepackage[title]{appendix}
\usepackage{lscape}
\usepackage{makecell}
\usepackage{colortbl}
%  \mathbf
%%%%%%%%%%%%%%%%%%%%%%%%%%%%%%%%%%%%%%%%%

\newcommand{\bM}{\mathbf{M}}\newcommand{\bX}{\mathbf{X}}\newcommand{\bx}{\mathbf{x}}
%\newcommand{\bY}{\mathbf{Y}}

% \mtahbb
%%%%%%%%%%%%%%%%%%%%%%%%%%%%%%%

% % \mtahrm
% %%%%%%%%%%%%%%%%%%%%%%%%%%%%%%%
% \newcommand{\bTG}{\mathrm{TG}}
% \usepackage{xspace}
% \newcommand{\et}{\emph{t}\xspace}
% \newcommand{\tfa}{\emph{t}FA\xspace}
% \newcommand{\tbfa}{\emph{t}BFA\xspace}
% \newcommand{\mt}{\emph{Mt}\xspace}

% \mathit 意大利斜体
% \newcommand{\itpsi}{\mathit{\psi}}

% \widetilde
%%%%%%%%%%%%%%%%%%%%%%%%%%%%%%%

% \mathcal
%%%%%%%%%%%%%%%%%%%%%%%%%%%%%%%
\newcommand{\cN}{\mathcal{N}}

% \widehat
%%%%%%%%%%%%%%%%%%%%%%%%%%%%%%%
\newcommand{\bwM}{\widehat{\bM}}\newcommand{\bwSig}{\widehat{\bSig}}\newcommand{\bwmu}{\widehat{\bmu}}

% \tilde
%%%%%%%%%%%%%%%%%%%%%%%%%%%%%%%

% \hat
%%%%%%%%%%%%%%%%%%%%%%%%%%%%%%%

% \boldsymbol
%%%%%%%%%%%%%%%%%%%%%%%%%%%%%%%
\newcommand{\bmu}{\boldsymbol{\mu}} 
\newcommand{\bSig}{\boldsymbol{\Sigma}}

\newcommand\refe[1]{(\ref{#1})}\newcommand\refp[1]{Proposition~\ref{#1}}
\newcommand\reff[1]{Fig.~\ref{#1}}\newcommand\reft[1]{Table~\ref{#1}}\newcommand\refs[1]{Sec.~\ref{#1}} %\newcommand\reft[1]{Tab.~\ref{#1}}

\theoremstyle{plain} \newtheorem{prop}{Proposition} %\linespread{1.6}

%

 % jhzhao
  % jhzhao
 % Ma
  % Ma
% Shang
 % Y. Wang
 % Y. Wang
% F. Jiang
 % F. Jiang

\def\R{{\mathbb R}}
\def\etr{\mbox{etr}}
\def\tr{\mbox{tr}}
\def\vc{\mbox{vec}}

\def\s2{\sigma^2}

% \def\vec{\mbox{vec}}

%\newcommand\blue[1]{\textcolor[rgb]{0.00,0.00,1.00}{\ref{#1}}}
%\setlength{\oddsidemargin}{-3mm} \setlength{\evensidemargin}{-3mm}%-7mm
%\setlength{\textwidth}{169mm} \setlength{\topmargin}{-10mm} %155mm
%\setlength{\textheight}{235mm} \setlength{\parskip}{\bigskipamount}
%\addtolength{\parindent}{-\parindent}
%\numberwithin{equation}{section}
%\renewcommand{\refname}{\fontsize{10pt}{\baselineskip}\selectfont References}

\begin{document}	
		\title{Matrix Healy Plot: A Practical Tool for Visual Assessment of Matrix-Variate Normality}
  \begin{frontmatter}
        \author[J. Zhao]{Fen Jiang}
 \author[J. Zhao]{Jianhua Zhao\corref{cor1}}\cortext[cor1]{Corresponding author.} \ead{jhzhao.ynu@gmail.com}
 \author[C. Shang]{ Changchun Shang}
 \author[J. Zhao]{Xuan Ma}
 \author[J. Zhao]{Yue Wang}
 \author[J. Zhao]{Ye Tao}

 \address[J. Zhao]{School of Statistics and Mathematics, Yunnan University of Finance and Economics, Kunming, 650221, China}
 \address[C. Shang]{School of Mathematics and Statistics, Guilin University of Technology, Guilin, 541004, China}
	
	\begin{abstract}
		Matrix-valued data, where each observation is represented as a matrix, frequently arises in various scientific disciplines. Modeling such data often relies on matrix-variate normal distributions, making matrix-variate normality testing crucial for valid statistical inference. Recently, the Distance-Distance (DD) plot has been introduced as a graphical tool for visually assessing matrix-variate normality. However, the Mahalanobis squared distances (MSD) used in the DD plot require vectorizing matrix observations, restricting its applicability to cases where the dimension of the vectorized data does not exceed the sample size. To address this limitation, we propose a novel graphical method called the Matrix Healy (MHealy) plot, an extension of the Healy plot for vector-valued data. This new plot is based on more accurate matrix-based MSD that leverages the inherent structure of matrix data. Consequently, it offers a more reliable visual assessment. Importantly, the MHealy plot eliminates the sample size restriction of the DD plot and hence more applicable to matrix-valued data. Empirical results demonstrate its effectiveness and practicality compared to the DD plot across various scenarios, particularly in cases where the DD plot is not available due to limited sample sizes.
	\end{abstract}
	
	\begin{keyword} Matrix-valued data, Matrix-variate normality, Mahalanobis squared distances, Matrix Healy plot.
       \end{keyword}
	
	\end{frontmatter}
	\section{Introduction}\label{sec:intr} 
	Matrix-valued data have garnered significant attention across various fields due to their natural representation of complex relationships. In environmental studies, for instance, variables of interest such as temperature, humidity, or air quality are often recorded across geographical regions, where the data can be structured as matrices with dimensions corresponding to latitude and longitude. Similarly, matrix-valued data are prevalent in two-dimensional digital imaging, brain surface mapping, and spatio-temporal data studies \citep{tomarchio2022mixtures, zhang2023covariance}. These data structures provide an intuitive and efficient framework for capturing multidimensional dependencies and enabling advanced analyses. Recent years have witnessed increasing interest in the analysis of matrix-valued data, with substantial contributions from studies, such as \cite{gallaugher2017matrix, gallaugher2018finite, gallaugher2024clustering,kong2020l2rm, chen2023statistical, zhao2023-rfpca,shang2024bfa, zhang2024modeling,zhang2025balanced}.

Modeling such data of size $c\times r \times N$, often relies on matrix-variate normal (MVN) distributions, an extension of the multivariate normal distribution that effectively captures dependencies across rows and columns. This property makes the MVN distribution particularly well suited for analyzing such data. Similar to the importance of the multivariate normal distribution in traditional multivariate analysis, assessing whether the underlying data conform to matrix-variate normality is crucial. 
While methods for testing multivariate normality are well-established \citep{ramzan2013evaluating,korkmaz2014mvn,kim2018likelihood}, research on matrix-variate normality testing remains in its infancy, with relatively few studies available. A two-phase approach is commonly employed to assess matrix-variate normality. In the first phase, multivariate normality is verified using established techniques. In the second phase, matrix-variate normality is tested by examining whether the vectorized data exhibits a Kronecker-structured covariance matrix. Likelihood ratio tests (LRTs) are widely used for evaluating the presence of a Kronecker-structured covariance matrix \citep{lu2005likelihood}. However, both phases of the approach become infeasible when the dimension $cr$ of the vectorized data of size $cr \times N$ exceeds the sample size $N$, significantly limiting the applicability of the two-phase approach.

Graphical tools, such as the QQ plot \citep{easton1990multivariate} and the Healy plot \citep{healy1968multivariate}, are extensively used to assess multivariate normality due to their visual appeal and interpretability. Building upon this, graphical methods tailored specifically for matrix-variate normality provide critical insights by identifying deviations from the assumed distribution. These tools enable researchers to evaluate model adequacy and enhance the robustness of statistical inferences. Recently, the Distance-Distance (DD) plot has been proposed as a graphical tool to visually assess matrix-variate normality \citep{povcuvca2023visual}. However, the Mahalanobis squared distances (MSD) used in the DD plot require vectorizing matrix observations, which renders maximum likelihood estimation (MLE) infeasible when the dimension $cr$ of the vectorized data exceeds the sample size $N$\citep{dutilleul1999mle}. This limitation restricts the DD plot’s applicability to cases where $N > cr$.
%\fb{Add LRT for MVN.} 

To address this limitation, we propose a novel graphical method for visual assessment of matrix-variate normality, called the Matrix Healy (MHealy) plot, an extension of the Healy plot\citep{healy1968multivariate} for vector-valued data. The main contributions of this paper are as follows.

%\cb{
 \begin{enumerate}[(i)]
 \item MHealy plot provides a practical graphical tool for assessing matrix-variate normality, overcoming the limitation of the DD plot, which can not be applied when $N<cr$. Furthermore, empirical results show that the MHealy plot offers superior visual assessment even when $N<cr$.  
 
 \item Theoretically, we show that the matrix-based MSD is more accurate than the vector-based MSD when a Kronecker product covariance structure exists. Since the DD plot relies on the vector-based MSD, whereas the MHealy plot is purely based on the matrix-based MSD, our theoretical findings substantiate the superiority of the MHealy plot over the DD plot.
 \end{enumerate}  

 The remainder of this paper is organized as follows. \refs{sec:pre} provides an overview of the preliminary knowledge. \refs{sec:meth} reviews the existing method for visualizing matrix-variate normality and introduces the proposed method in detail. \refs{sec:sim} validates the effectiveness of the proposed method using simulation datasets, whereas \refs{sec:appli} demonstrates its application on real-world data. Finally, \refs{sec:conc} discusses potential future directions and areas for further research and development.

\section{Preliminary}\label{sec:pre}
This section provides a concise overview of the multivariate normal distribution, the matrix-variate normal distribution, and the existing method of assessing matrix-variate normality testing. The notations used consistently throughout the paper include the transpose of a vector $\bx$ or matrix $\bX$, denoted by $\bx^{\top}$ or $\bX^{\top}$, respectively; the trace of a matrix, indicated by $\tr(\cdot)$; the vectorization operation, represented by $\vc(\cdot)$; and the Kronecker product, signified by $\otimes$. 

\subsection{Multivariate normal distribution and its Mahalanobis squared distance}
\label{sec:mnandmsd}
\subsubsection{Multivariate normal distribution }
Suppose that a $d$-dimensional random vector $\bx$ follows the multivariate normal distribution with center $\bmu\in \R^{d}$, positive definite matrix $\bSig\in \R^{d\times d}$, denoted by $\bx\sim\cN_d(\bmu,\bSig)$, then the probability density function (p.d.f.) of $\bx$ is given by
\begin{IEEEeqnarray}{rCl}\label{eqn:mn.pdf}
p(\bx)&=&(2\pi)^{-\frac{d}{2}}|\bSig|^{-\frac{1}{2}} \exp \left\{-\frac{1}{2}(\bx-\bmu)^{\top}\bSig^{-1}(\bx-\bmu) \right\}.
\end{IEEEeqnarray}

Consider $N$ $d$-dimensional vectors $\bx_1, \ldots, \bx_N$, where each $\bx_n$ represents a realisation from a multivariate random variable $\bx$. Then the maximum likelihood estimates(MLEs) for $\bmu$ and $\bSig$  are given by \citep{anderson1985maximum}
\begin{IEEEeqnarray}{rCl} 
\bwmu &=& \frac{1}{N}\sum_{n=1}^{N}\bx_n,\label{eqn:mn.mle.mu} \\
\bwSig&=& \frac{1}{N-1}\sum_{n=1}^{N}(\bx_n-\bwmu)(\bx_n-\bwmu)^{\top}.\label{eqn:mn.mle.sig}
\end{IEEEeqnarray} 

\subsubsection{ MSD of Multivariate normal distribution} \label{sec:msd.mn}
 The vector-based MSD, given $\bx_n, \bmu$ and $\bSig$, is defined as
\begin{IEEEeqnarray}{rCl}\label{eqn:mn.msd}
\Delta(\bx_n,\bmu,\bSig)&=&(\bx_n-\bmu)^{\top}\bSig^{-1}(\bx_n-\bmu).
\end{IEEEeqnarray}

It is a well-established fact that \citep{mardia1979multivariate}  
\begin{IEEEeqnarray}{rCl} \label{eqn:mn.msd.chis}
\Delta(\bx_n,\bmu,\bSig) \sim \chi^2_{d},
\end{IEEEeqnarray}
where $\chi^2_{d} $ denotes a chi-square distribution with $d$ degrees of freedom. 

\subsection{Matrix-variate normal distribution and its MSD}\label{sec:mvnandmsd}
\subsubsection{Matrix-variate normal distribution}
\label{sec:mvn}
Suppose that a random matrix $\bX\in\R^{c\times r}$ is said to follow a matrix-variate normal distribution with center matrix $\bM\in\R^{c\times r}$, column and row covariance matrices $\bSig_c\in\R^{c\times c}$ and $\bSig_r\in\R^{r\times r}$, denoted by $\bX\sim \cN_{c, r}\left(\bM, \bSig_c, \bSig_r\right)$, if $\vc(\bX)\sim \cN_{cr}(\vc(\bM), \bSig_r\otimes\bSig_c)$ \citep{gupta1999matrix}. The p.d.f. of $\bX$ is given by 
\begin{IEEEeqnarray}{rCl}
p\left(\bX\right)&=& (2\pi)^{-\frac{c r}{2}}\left|\bSig_r\right|^{-\frac{c}{2}}\left|\bSig_c\right|^{-\frac{r}{2}}\etr\left\{-\frac{1}{2}\bSig_c^{-1}(\bX-\bM)\bSig_r^{-1}(\bX-\bM)^{\top}\right\},\label{eqn:mvn.pdf}
\end{IEEEeqnarray}
where $\etr(\cdot)=\exp(\tr(\cdot))$.

Note that there is an identifiability issue regarding the parameters $\bSig_c$ and $\bSig_r$. Specifically, if $a$ is any positive constant, the following equation holds:
\begin{IEEEeqnarray}{rCl}\label{eqn:mvn.kron}
\left(\frac{1}{a}\bSig_r\right) \otimes(a \bSig_c) &=& \bSig_r \otimes \bSig_c,
\end{IEEEeqnarray}
this implies that replacing $\bSig_c$ and $\bSig_r$ with  $\frac{1}{a} \bSig_c$ and $a \bSig_r$, respectively, does not affect the probability density function defined in  (\ref{eqn:mvn.pdf}). To address this identifiability issue, several solutions have been proposed in the literature, such as constraining  $\tr(\bSig_c)=c$ or setting the first element of $\bSig_c$ to ${\bSig}_{c,11}=1$ \citep{fosdick2014separable, gallaugher2018finite}.

Consider $N$ $c \times r$ matrices $\bX_1, \ldots, \bX_N$, where each $\bX_n$ is a realization of the matrix-variate normal random variable $\bX$. The maximum likelihood estimate of $\bM$ is given by
\begin{IEEEeqnarray}{rCl}\label{eqn:mvn.mle.M}
\bwM &=& \frac{1}{N}\sum_{n=1}^{N}\bX_n.
\end{IEEEeqnarray}

The maximum likelihood estimates for $\bSig_c$ and $\bSig_r$ are obtained by alternately updating $\bSig_c$ and $\bSig_r$ until convergence is achieved \citep{dutilleul1999mle}:
\begin{IEEEeqnarray}{rCl}
\bwSig_c&=& \frac{1}{Nr}\sum_{n=1}^{N}(\bX_n-\bwM)\bwSig_r^{-1}(\bX_n-\bwM)^{\top},\label{eqn:mvn.mle.sigc} \\
\bwSig_r &=& \frac{1}{Nc}\sum_{n=1}^{N}(\bX_n-\bwM)^{\top}\bwSig_c^{-1}(\bX_n-\bwM).\label{eqn:mvn.mle.sigr}
\end{IEEEeqnarray}

\subsubsection{MSD of Matrix-variate normal distribution }
\label{sec:mvn.msd}
The matrix-based MSD can then be written as\citep{povcuvca2023visual}
\begin{IEEEeqnarray}{rCl}\label{eqn:mvn.msd}
\Delta_{M}(\bX_n,\bM, \bSig_c, \bSig_r)=\tr\left\{\bSig_c^{-1}\left(\bX_n-\bM\right)\bSig_r^{-1}\left(\bX_n-\bM\right)^{\top}\right\}.
\end{IEEEeqnarray}

It is recognized that\citep{gupta1999matrix}
\begin{IEEEeqnarray}{rCl} \label{eqn:mvn.msd.chis}
\Delta_{M}(\bX_n,\bM, \bSig_c, \bSig_r)\sim \chi^2_{cr}.
\end{IEEEeqnarray}

\subsection{Distance-Distance plot}\label{sec:ddplot}
Recall the relationship between the matrix normal and the multivariate normal distributions in \refs{sec:mvn}. Let $\bmu=\vc(\bM)$ and $\bSig=\bSig_r \otimes \bSig_c$. It follows that $\vc(\bX_n)\sim \cN_{cr}(\bmu, \bSig)$, Then
\begin{IEEEeqnarray}{rCl}
\Delta(\vc(\bX_n),\bmu,\bSig)&=&(\vc(\bX_n)-\bmu)^{\top}\bSig^{-1}(\vc(\bX_n)-\bmu).
\label{eqn:mvn.vcmsd}
\end{IEEEeqnarray}

Moreover, the maximum likelihood estimates of $\bmu$ and $\bSig$ are, respectively:
\begin{IEEEeqnarray}{rCl}
\bwmu &=& \frac{1}{N}\sum_{n=1}^{N}\vc(\bX_n)=\vc(\bwM),\label{eqn:mvn.mle.vcmu} \\
\bwSig &=& \frac{1}{N-1}\sum_{n=1}^{N}\left(\vc(\bX_n)-\bwmu\right)\left(\vc(\bX_n)-\bwmu\right)^{\top}.\label{eqn:mvn.mle.vcsig} 
\end{IEEEeqnarray}

 \cite{povcuvca2023visual} proposed a graphical tool for visually assessing matrix-variate normality. This method calculates two Mahalanobis squared distances, the vector-based MSD, $\Delta(\vc(\bX_n),\bmu,\bSig)$ and the matrix-based MSD, $\Delta_{M}(\bX_n,\bM,\bSig_c,\bSig_r)$. These distances are then presented in a scatter plot, called the DD plot, which highlights the consistency between matrix-based and vector-based MSD. If the points in the DD plot align with the reference line $y=x$, the assumption of matrix-variate normality is supported. However, the vector-based MSD used in the DD plot require vectorizing matrix observations, which makes MLEs infeasible when $N<cr$. 
 
As shown in \reff{fig:matrix_multi_N>d}(b), the DD plot for simulated data from a matrix-variate normal distribution with $N = 1000, c = 2, r = 2$, shows the MSDs aligning closely with the reference line(the red solid line), indicating conformity to the matrix-variate normality assumption. However, \reff{fig:matrix_multi_N close d}(b) presents data simulated under the same distribution with $N = 1000, c = 30, r = 30$. In this case, the MSDs exhibit substantial variability and deviate significantly from the reference line. This suggests that the data fail to meet the matrix-variate normality assumption despite being generated from a matrix-variate normal distribution. This highlights the limitations of the DD plot in reliably assessing matrix-variate normality. To overcome this limitation, the following section introduces the proposed visualization method.

\section{Methodology}\label{sec:meth}
\subsection{The proposed Matrix Healy plot}\label{sec:mtplot}

The Healy plot, originally introduced by \cite{healy1968multivariate}, provides a visualization method to evaluate multivariate normality by plotting ordered Mahalanobis squared distances against the expected order statistics of a chi-square distribution. When the data conform to a multivariate normal distribution, the points should approximately align with the line $y=x$. Subsequently, \cite{lin2015robust} extended this approach to evaluate the goodness of fit for factor analysis models. Their method, referred to as the Healy-type plot, involves plotting the cumulative probabilities of the $F(d,\nu)$ distribution corresponding to the ordered MSD $f_{n},n=1,2,\ldots, N$, against their nominal values $1/N, 2/N, \ldots, 1$. Here $F(d,\nu)$  represents the $F$ distribution with degrees of freedom $d$ and $\nu$.

% where $F(d,\nu)$ denotes the $F$ distribution with degrees of freedom $d$ and $\nu$. These probabilities are then plotted against their nominal values $1/N, 2/N, \ldots, 1$.
%\fb{make clearer}

Building on these earlier studies, this paper proposes a practical tool for assessing matrix-variate normality testing, termed the matrix Healy (MHealy) plot. By fully considering the matrix structure of the data, this method eliminates the sample size restriction. The specific process for implementing this newly developed visualization method to assess matrix-variate normality is outlined as follows:
\begin{enumerate}[(i)]
\item Calculate the matrix-based MSD, $\Delta_{M}(\bX_n,\bwM, \bwSig_c, \bwSig_r),n=1,2,\ldots, N$ by (\ref{eqn:mvn.msd}). 

\item Compute cumulative probability $p_{n}$ of the chi-square distribution associated with the ordered value of $\Delta_{M}(\bX_n,\bwM, \bwSig_c, \bwSig_r)$, $n=1,2,\ldots, N$. Specifically, $p_{n}=P(X \leq \Delta_{M}(\bX_n,\bwM, \bwSig_c, \bwSig_r) )$, where $X \sim \chi^2_{cr}$ and $\chi^2_{cr}$ denotes the chi-square distribution with degrees of freedom $cr$, 
as outlined in the method described by \cite{lin2015robust}. 
%\fb{make clearer}  

\item Construct a scatter plot using nominal values $1/N, 2/N, \ldots, 1$ as the horizontal axis and the cumulative probabilities as the vertical axis.
\end{enumerate}

  If the data follow the assumption of matrix-variate normality, the observed MSD values are consistent with the theoretical chi-square distribution\citep{van2000asymptotic}. Therefore, the resulting plot should approximate a straight line $y=x$ (the reference line). In contrast, significant deviations of the points from this 45-degree line indicate stronger evidence against the matrix-variate normality assumption.

\subsection{The Properties of MSD} \label{sec:prop}

\begin{prop} \label{prop:mvn.msd1}
	If a Kronecker product structure exists for $\bSig=\bSig_r \otimes \bSig_c$, and let us assume $c=r$, the following results hold:

\begin{IEEEeqnarray}{rcl}
\Delta(\vc(\bX_n),\bwmu,\bwSig)
&=& \Delta(\vc(\bX_n),\bmu,\bSig)+O_p\left(\frac{c^2}{\sqrt{N}}\right) \label{eqn:mn.msdpro1}, \\
 \Delta_{M}(\bX_n,\bwM, \bwSig_c, \bwSig_r) 
&=& \Delta_{M}(\bX_n,\bM, \bSig_c, \bSig_r)+ O_p\left(\frac{c}{\sqrt{N}}\right).
\label{eqn:mvn.msdpro1}
\end{IEEEeqnarray}
\end{prop}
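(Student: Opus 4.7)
My plan is to expand each MLE-based MSD around its oracle counterpart via the first-order perturbation of the matrix inverse, $\bwA^{-1}=\bA^{-1}-\bA^{-1}(\bwA-\bA)\bA^{-1}+\ldots$, and then read off the stochastic order of the leading linear term from its variance via Chebyshev's inequality. I will work in a regime where $c$ grows slowly enough in $N$ that the second-order inverse remainder and the $\bwmu$-based contributions are of strictly smaller order than the rates in the proposition and so fall inside the $O_p$ terms; under $c^{4}/N\to 0$ this is immediate.

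For (\ref{eqn:mn.msdpro1}), I would first show that after the expansion the leading deviation reduces to $-\bv^{\top}(\bwSig-\bSig)\bv$, where $\bv:=\bSig^{-1}\vc(\bX_{n}-\bM)$. Using the classical sample-covariance identity $\mathrm{Cov}((\bwSig-\bSig)_{ij},(\bwSig-\bSig)_{kl})=(N-1)^{-1}(\bSig_{ik}\bSig_{jl}+\bSig_{il}\bSig_{jk})$, a direct calculation gives $\mathrm{Var}(\bv^{\top}(\bwSig-\bSig)\bv)=2(\bv^{\top}\bSig\bv)^{2}/(N-1)$. Since $\bv^{\top}\bSig\bv=\vc(\bX_n-\bM)^{\top}\bSig^{-1}\vc(\bX_n-\bM)=\Delta(\vc(\bX_{n}),\bmu,\bSig)\sim\chi^{2}_{c^{2}}$ is $O_{p}(c^{2})$, the leading-term variance is $O_{p}(c^{4}/N)$, yielding the claimed $O_{p}(c^{2}/\sqrt{N})$ rate.

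For (\ref{eqn:mvn.msdpro1}) the same Taylor argument, applied separately to $\bwSig_{c}$ and $\bwSig_{r}$, produces the leading perturbation $-\tr((\bwSig_{c}-\bSig_{c})B_{c})-\tr((\bwSig_{r}-\bSig_{r})B_{r})$, with $B_{c}:=\bSig_{c}^{-1}(\bX_n-\bM)\bSig_{r}^{-1}(\bX_n-\bM)^{\top}\bSig_{c}^{-1}$ and $B_{r}$ its transposed analogue. The critical observation I will exploit is that the $r$ columns of the whitened residual $(\bX_n-\bM)\bSig_{r}^{-1/2}$ are i.i.d.\ $\cN_{c}(\bo,\bSig_{c})$, independently across $n$, so the oracle one-step estimator $(Nr)^{-1}\sum_{n}(\bX_{n}-\bM)\bSig_{r}^{-1}(\bX_{n}-\bM)^{\top}$ is effectively a sample covariance based on $Nr=Nc$ i.i.d.\ draws from $\cN_{c}(\bo,\bSig_{c})$, whose entry covariance therefore carries a $1/(Nr)$ factor rather than $1/N$. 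Combined with $\bbE[B_{c}]=r\bSig_{c}^{-1}$ (immediate from $\bbE[(\bX_n-\bM)\bSig_{r}^{-1}(\bX_n-\bM)^{\top}]=r\bSig_{c}$), the analogous variance calculation gives $\mathrm{Var}(\tr((\bwSig_{c}-\bSig_{c})B_{c}))\asymp 2\tr((\bSig_{c}B_{c})^{2})/(Nr)=O_{p}(r^{2}c/(Nr))=O_{p}(c^{2}/N)$, producing the $O_{p}(c/\sqrt{N})$ rate; the $\bwSig_{r}-\bSig_{r}$ contribution is handled symmetrically.

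The hard part will be rigorously lifting this ``effective sample size $Nr$'' argument from the oracle one-step estimator to the actual MLE $\bwSig_{c}$, which is defined only implicitly through the alternating fixed-point system (\ref{eqn:mvn.mle.sigc})--(\ref{eqn:mvn.mle.sigr}). My plan is to show that substituting $\bwSig_{r}$ for $\bSig_{r}$ in (\ref{eqn:mvn.mle.sigc}) perturbs the result by only $O_{p}(1/\sqrt{N})$ in operator norm -- the same order as the one-step fluctuation itself, hence not inflating the stated rate -- by plugging $\bwSig_{r}-\bSig_{r}=O_{p}(1/\sqrt{N})$ into the defining equation. The identifiability indeterminacy (\ref{eqn:mvn.kron}) is harmless throughout, since both sides of each statement depend on $(\bSig_{c},\bSig_{r})$ only through the Kronecker product $\bSig_{r}\otimes\bSig_{c}$.
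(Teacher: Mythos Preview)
Your plan coincides with the paper's proof in its overall architecture: both use the first-order Neumann expansion $\bwA^{-1}\approx\bA^{-1}-\bA^{-1}(\bwA-\bA)\bA^{-1}$, expand each MSD around its oracle value, and rely on the ``effective sample size $Nr$'' phenomenon for $\bwSig_{c}$ to obtain the sharper matrix rate. The paper encodes that last point by simply asserting the entrywise/Frobenius rate $\|\bwSig_{c}-\bSig_{c}\|_{F}=O_{p}(c/\sqrt{Nr})$ and then plugging norm bounds into each term of the expansion, whereas you extract the rate from an explicit conditional-variance calculation for the linear term $\tr\{(\bwSig_{c}-\bSig_{c})B_{c}\}$. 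Your route is the tighter one: a naive Cauchy--Schwarz with the Frobenius bound actually overshoots the claimed $O_{p}(cr/\sqrt{N})$ and $O_{p}(\sqrt{cr}/\sqrt{N})$ rates, so the paper's ordering of terms is somewhat heuristic, while your variance computation lands directly on them.

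Two small remarks. First, the subtlety you flag---lifting the $Nr$ effective-sample-size argument from the oracle one-step update to the implicit MLE fixed point---is not addressed in the paper either; there the rate $\|\bwSig_{c}-\bSig_{c}\|_{F}=O_{p}(c/\sqrt{Nr})$ is stated without derivation, so your proposed perturbation of the fixed-point equation would actually fill a gap present in the original. Second, for the matrix case the $\bwM$-contribution $-2\tr\{\bSig_{c}^{-1}(\bwM-\bM)\bSig_{r}^{-1}(\bX_{n}-\bM)^{\top}\}$ is $O_{p}(\sqrt{cr}/\sqrt{N})=O_{p}(c/\sqrt{N})$, i.e.\ the \emph{same} order as the target rate, not strictly smaller as your preamble suggests; in the paper this is in fact the dominant term in the final tally. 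This does not affect your conclusion---it still ``falls inside the $O_{p}$''---but you should not discard it as lower order in the matrix-variate expansion.
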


\begin{proof}
	The proof is provided in the Appendix. 
\end{proof}

 Based on \refp{prop:mvn.msd1}, the error growth rate for the vectorized Mahalanobis squared distance is $c$ times greater than that of the matrix-based Mahalanobis distance. These findings highlight the superior accuracy and reliability of the proposed matrix-based Mahalanobis squared distances for analytical applications. 

\subsection{The relationship and difference among MHealy plot, Healy type plot and DD plot} \label{sec:relation}

All three plots are grounded in Mahalanobis squared distances and are used to evaluate whether the observed data adhere to normality. The Healy type plot is only applicable to vector-valued data, utilizing vector-based MSD and mapping these values to chi-square cumulative probabilities to assess multivariate normality. In contrast, the MHealy plot extends this concept to matrix-valued data, mapping matrix-based MSD values to chi-square cumulative probabilities to directly evaluate matrix-variate normality. Similarly, the DD plot is designed for matrix-valued data but focuses on comparing the relationship between matrix-based and vector-based MSD. 

As a result, although both the MHealy and DD plots are designed for matrix-valued data, a key distinction lies in their underlying MSD: the DD plot relies on vector-based MSD, while the MHealy plot exclusively employs the more accurate matrix-based MSD, as shown in \refp{prop:mvn.msd1}.

\section{ Simulations}\label{sec:sim}

In this section, we conduct experiments on simulated datasets to compare the performance of our proposed Mhealy plot with DD plot and two-phase approach for assessing whether the data conform to the matrix-variate normal distribution. Note that we use Healy type plot to verify multivariate normality in the first phase of two-phase approach. Since the existing approach is not feasible when the sample size $N$ is smaller than the dimension of the vectorized data $d=cr$, we conducted separate experiments for cases where $N > d$, $N$ is close to $d$, and $N < d$. The study is implemented on a notebook with MATLAB 2022b, an Intel(R) Core(TM) i7-11800H 2.30 GHz CPU, and 16 GB RAM. 

%\fb{add the results of LRT in some experiment. Add a similar result to Fig.~2 in \citep{povcuvca2023visual}}

In these experiments, data are generated from a matrix-variate normal distribution as defined in (\ref{eqn:mvn.msd}) and a strict multivariate normal distribution without a Kronecker product structure as described in (\ref{eqn:mn.msd}). The sample size is set to $N=1000$,
with dimensions $d \in \{4, 900, 1600, 10000, 40000\}$. For each scenario, the mean vector and covariance matrix are randomly generated. Importantly, square matrices are used throughout the experiment: for $d = 4$, the matrix-value data corresponds to a $c \times r = 2 \times 2$ matrix; for $d = 900$, to a $c \times r = 30 \times 30$ matrix; for $d = 1600$, to a $c \times r = 40 \times 40$ matrix; for $d =10000$, to a $c \times r = 100 \times 100$ matrix and for $d = 40000$, to a $c \times r =200 \times 200$ matrix. 

\subsection{The sample size $N$ is greater than the dimension $d$}
\label{sec: N greater d}
In this experiment, we investigate the performance of the proposed Mhealy plot, DD plot, and two-phase approach for assessing matrix-variate normality when $N>d$. \reff{fig:matrix_multi_N>d}(a,b,c) illustrates the MHealy plot (left panel), DD plot (middle panel), and Healy type plot (right panel) for simulated matrix-variate normal datasets with $N=1000,c=2,r=2$. The points in MHealy plot and DD plot align closely with the reference line(the red solid line), indicating that the data conform to the matrix-variate normality assumption when $N>d=cr$. In particular, the equivalence between multivariate and matrix-variate normal distributions, as discussed in \refs{sec:mvn}, implies that if vectorized data do not exhibit multivariate normality, then the original matrices cannot follow a matrix-variate normal distribution. Therefore, Healy type plot is expected to align with the reference line. The points of Healy type plot in \reff{fig:matrix_multi_N>d}(c)  align with the reference line, suggesting that Healy type plot effectively assesses multivariate normality for the vectorized data in the first phase of two-phase approach.

\begin{figure*}[htb]
	\centering \scalebox{0.74}[0.74]{\includegraphics*{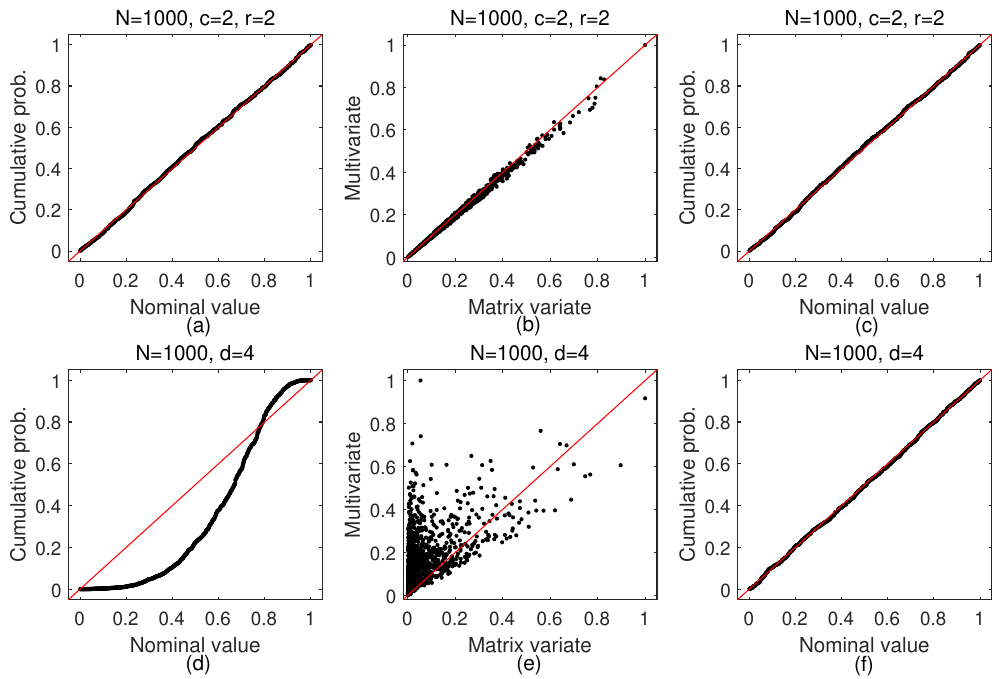}}
	\caption{Top row: MHealy plot(left panel), DD plot(middle panel) and Healy type plot(right panel) for simulated matrix-variate normal datasets with $N=1000,c=2,r=2$, where a matrix-variate normal structure is present. Bottom row: MHealy plot(left panel), DD plot(middle panel) and Healy type plot(right panel) for simulated strict multivariate normal datasets with $N=1000,d=4$, where a matrix-variate normal structure is absent.} \label{fig:matrix_multi_N>d}
\end{figure*}

\reft{tab:fig1lrt} presents the $p$-values of LRT for simulated matrix-variate normal and strict multivariate normal datasets at the $5\%$ significance level. These datasets correspond to those analyzed in \reff{fig:matrix_multi_N>d}. For data generated from a matrix-variate normal with $N=1000,c=2,r=2$, the LRT yields a $p$-value of 0.156. As this value exceeds the significance threshold of 0.05, there is insufficient evidence to reject the null hypothesis of a Kronecker structure covariance matrix in the second phase of two-phase approach. By combining the results from the first and second stages, the two-phase approach effectively assess whether the data follow matrix-variate normality when $N=1000,c=2,r=2$.

\begin{table}[htbp]
	\centering
	\caption{\label{tab:fig1lrt} The $p$-values of LRT for simulated matrix-variate normal and  strict multivariate normal datasets at the $5\%$ significance level.}
	\begin{tabular}{ccc}
		\toprule 
	distribution	  &datasets            &  $p$ value   \\ \midrule	
	matrix-variate normal	 &$N=1000,c=2,r=2$     &  0.156\\ 
		                   &$N=1000,c=30,r=30$   &  0 \\ 
    strict multivariate normal  & $N=1000,d=4$          &  0\\
                              & $N=1000,d=900$ &  0  \\          
  \bottomrule
	\end{tabular}
\end{table}

\reff{fig:matrix_multi_N>d}(d,e,f) shows the MHealy plot (left panel), DD plot (middle panel) and Healy  type plot (right panel) for strict multivariate normal datasets with $N=1000,d=4$. The points in the MHealy plot and DD plot deviate significantly from the reference line, indicating that the data do not follow the matrix-variate normality assumption. This finding is consistent with the data's origin, as it was generated from strict multivariate normal distributions rather than matrix-variate normal distributions. Moreover, the points in the Healy type plot align closely with the reference line, indicating that the Healy type plot appropriately assesses multivariate normality. The LRT yields a $p$-value of 0, as shown in \reft{tab:fig1lrt}. Since this $p$-value is less than the significance threshold of 0.05, there is sufficient evidence to reject the null hypothesis of a Kronecker structure covariance matrix. Therefore, the two-phase approach effectively assesses whether the data follow matrix-variate normality for $N=1000$ and $d=4$.

\subsection{The sample size $N$ is close to the dimension $d$}
\label{sec:N close d}
In this experiment, we investigate the performance of the proposed Mhealy plot, DD plot, and two-phase approach for assessing matrix-variate normality when $N$ is close to $d$. \reff{fig:matrix_multi_N close d}(a,b,c) illustrates the MHealy plot (left panel), DD plot (middle panel), and Healy type plot (right panel) for simulated matrix-variate normal datasets with $N=1000,c=30,r=30$. In the MHealy plot(\reff{fig:matrix_multi_N close d}(a)), the points align closely with the reference line, indicating that the data adhere to the matrix-variate normality assumption when $N$ is close to $d$, which is consistent with the data's origin. Conversely, the DD plot (Figure \ref{fig:matrix_multi_N close d}(b)) shows significant deviations from the reference line, providing evidence against matrix-variate normality. Similarly, the Healy type plot (Figure \ref{fig:matrix_multi_N close d}(c)) exhibits noticeable deviations from the reference line, indicating that it is ineffective in evaluating multivariate normality for the vectorized data in this scenario. The LRT yields a $p$-value of 0, as reported in \reft{tab:fig1lrt}. Since this $p$-value is below the significance threshold of 0.05, there is strong evidence to reject the null hypothesis of a Kronecker structure covariance matrix. Thus, the DD plot and the two-phase approach are not suitable for assessing matrix-variate normality when $N$ is close to $d$.

\reff{fig:matrix_multi_N close d}(d,e,f) shows the MHealy plot (left panel), DD plot (middle panel) and Healy type plot (right panel) for strict multivariate normal datasets with $N=1000,d=900$. The points in the MHealy plot and DD plot deviate significantly from the reference line, indicating that the data do not follow the matrix-variate normality assumption. This observation aligns with the data's origin, as it was generated from strict multivariate normal distributions. Additionally, the Healy type plot shows significant deviations from the reference line, indicating its ineffectiveness in assessing multivariate normality. The LRT yields a $p$-value of 0, as shown in \reft{tab:fig1lrt}. As the $p$-value is below the significance level of 0.05, the null hypothesis can be rejected. Consequently, the two-phase approach proves to be ineffective in determining whether the data follow matrix-variate normality when $N$ is close to $d$.

\begin{figure*}[htb]
	\centering \scalebox{0.74}[0.74]{\includegraphics*{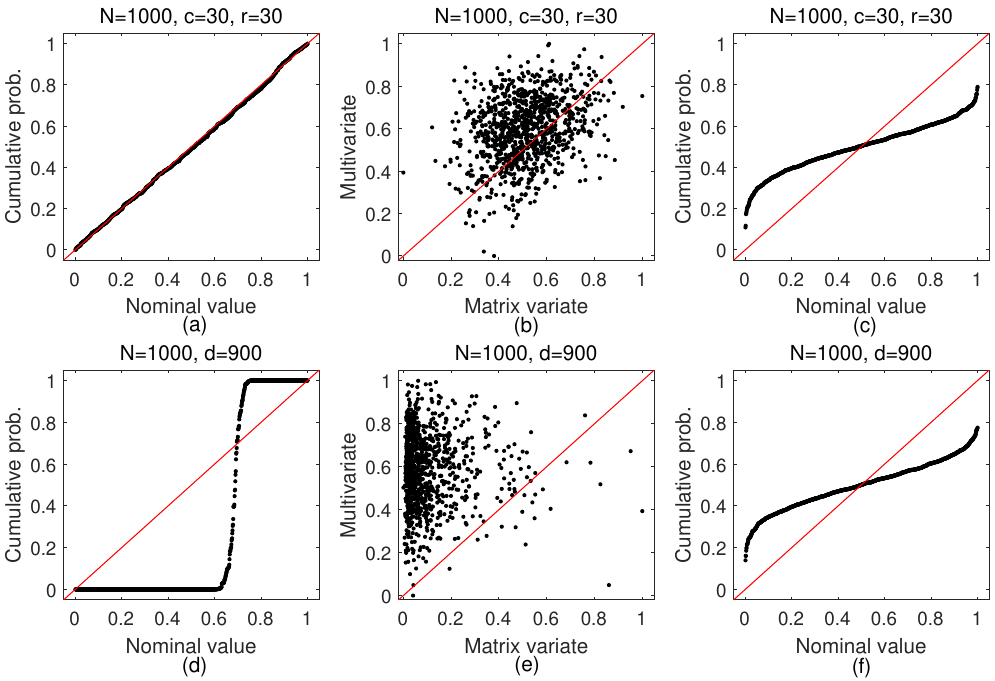}}
	\caption{Top row: MHealy plot(left panel), DD plot(middle panel) and Healy type plot(right panel) for simulated matrix-variate normal datasets with $N=1000,c=30,r=30$, where a matrix-variate normal structure is present. Bottom row: MHealy plot(left panel), DD plot(middle panel) and Healy type plot(right panel) for simulated strict multivariate normal datasets with $N=1000,d=900$, where a matrix-variate normal structure is absent.}  \label{fig:matrix_multi_N close d}
\end{figure*}

\subsection{The sample size $N$ is smaller than the dimension $d$}
\label{sec: N small d}
In this experiment, we assess the performance of the proposed Mhealy plot for evaluating matrix-variate normality when $N<d$. It is important to note that the DD plot and two-phase approach are not feasible under these conditions, as the parameter estimates for the multivariate normal distribution become unreliable. \reff{fig:matrix_multi1000} displays the MHealy plots for simulated matrix-variate normal data(Top row) and strictly multivariate normal data(Bottom row) under the condition $N<d$. 
In \reff{fig:matrix_multi1000}(a,b,c), with sample sizes $N=1000$ and dimensions $d$ increasing from 1600 to 40000, the scatter points in the MHealy plots closely align with the reference line, indicating conformity to the assumption of matrix-variate normal distribution. Conversely, \reff{fig:matrix_multi1000}(d,e,f) illustrates the MHealy plots for strictly multivariate normal data, where the scatter points deviate significantly from the reference line. These deviations reflect the lack of adherence to the matrix-variate normality assumption, which can be attributed to the absence of a Kronecker product covariance structure. These results highlight that the MHealy plot is a robust and practical tool for detecting matrix-variate normality, even in high-dimensional scenarios where $N < d$.

\begin{figure*}[htb]
	\centering \scalebox{0.74}[0.74]{\includegraphics*{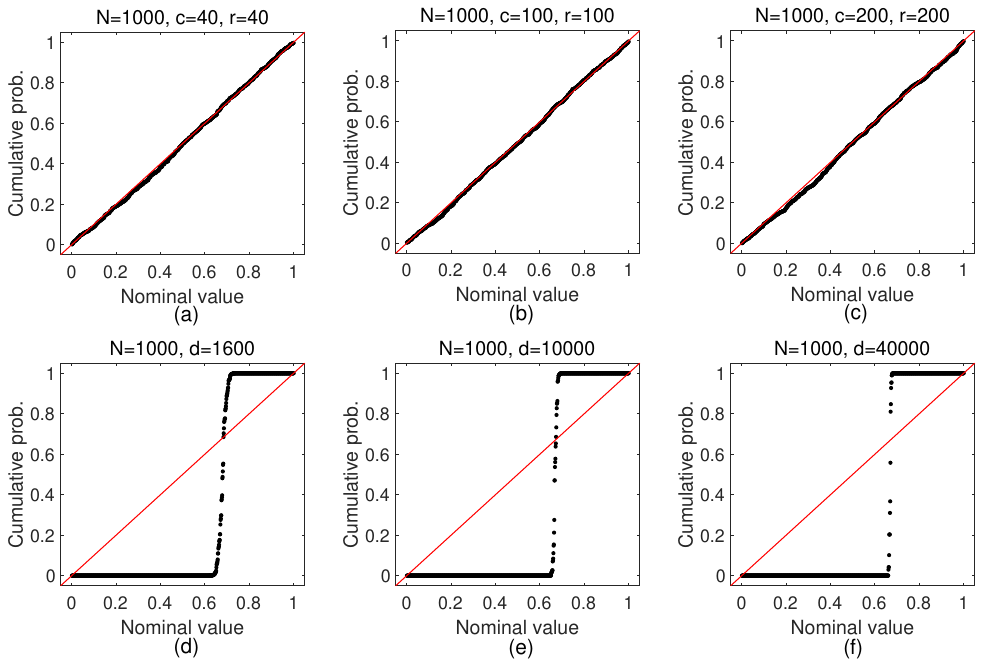}}
	\caption{Top row: MHealy plots for simulated matrix-variate normal datasets for $N=1000$ and $d \in \{1600, 10000, 40000\}$ with randomly selected mean and variance parameters. Bottom row: MHealy plots for simulated strict multivariate normal datasets for $N=1000$ and $d \in \{1600, 10000, 40000\}$, where a matrix-variate normal structure is absent.} \label{fig:matrix_multi1000}
\end{figure*}

\section{Applications}\label{sec:appli}
In this section, we aim to perform a comparative analysis of the performance of the Mhealy plot, DD plot, and two-phase approach on real datasets under two scenarios: $N>d$ and $N<d$. To this end, we selected two representative real datasets to evaluate the effectiveness and applicability of these methods under different conditions.

\subsection{Fama–French 10 by 10 return series}\label{sec:appli.ret}
 We assess matrix-variate normality on the Fama-French 10 by 10 return series used in \cite{wang2019factor}. This dataset includes 624 observations, covering monthly returns from January 1964 to December 2015, over 624 consecutive months. Each observation forms a $10 \times 10$ matrix, with columns representing ten levels sorted by increasing market capitalization and rows representing ten levels sorted by increasing book-to-market ratio. Following the methodology of \cite{wang2019factor}, we adjusted each observation by subtracting the corresponding monthly excess market return from the original returns, yielding a series of market-adjusted returns. 
 
 \reff{fig:trx} presents the MHealy plot and DD plot of the return series. Both plots reveal that the scatters deviate significantly from the reference line, showing substantial variability. This observation suggests that the data does not follow a matrix-variate normal distribution, offering no evidence to support the assumption of matrix-variate normality.

To make a clearer comparison between the two plots, we generate data from a matrix-variate normal distribution with $N=150$, $c=10$, and $r=10$, using parameter estimates derived from the return series under the matrix-variate assumption. 

As illustrated in \reff{fig:trx_add}, the scatter points in the MHealy plot (\reff{fig:trx_add}(a)) align closely with the reference line, providing strong support for the matrix-variate normality assumption. In contrast, the scatter points in both the DD plot(\reff{fig:trx_add}(b)) and the Healy type plot(\reff{fig:trx_add}(c)) deviate from the reference line, with greater variability observed in the DD plot compared to the Healy type plot. This suggests that the DD plot violates the matrix-variate normality assumption. Similarly, the Healy type plot demostrates the vectorized data do not follow the multivariate normality assumption, indicating a failure in the first phase of the two-phase approach. Furthermore, the $p$-value of the LRT is 0, which confirms that the second phase of the two-phase approach identified the absence of a Kronecker-structured covariance matrix, a result that is not in accordance with the actual data.

% \fb{N=150?} Moreover, it highlights the limitation of the DD plot in assessing matrix-variate normality when the data dimension $d=cr$ exceeds the sample size $N$, reducing its reliability in such scenarios.

\begin{figure*}[htb]
	\centering \scalebox{0.8}[0.8]{\includegraphics*{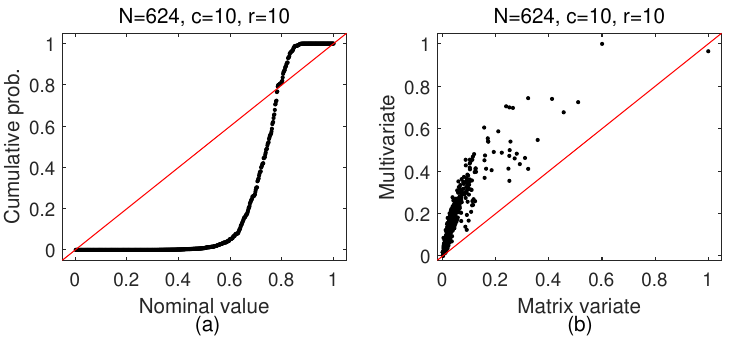}}
	\caption{ MHealy plot(left panel) and DD plot(right panel) for the return series indicate the lack of presence of a matrix-variate normal structure. } \label{fig:trx}
\end{figure*}

\begin{figure*}[htb]
	\centering \scalebox{0.8}[0.8]{\includegraphics*{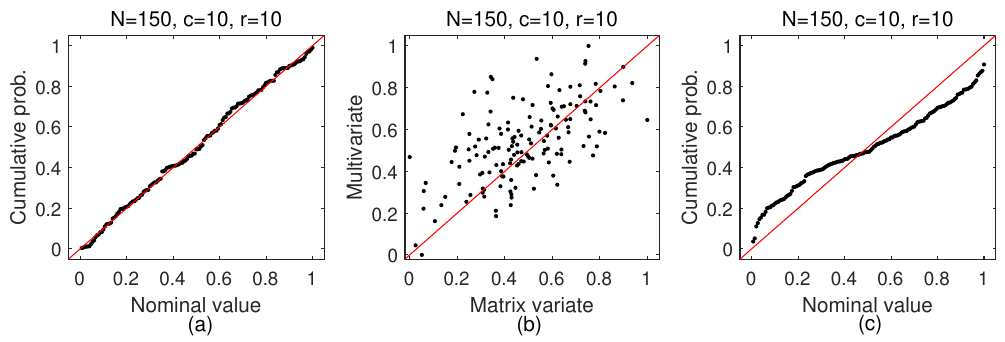}}
	\caption{ MHealy plot(left panel), DD plot(middle panel) and Healy type plot(right panel) for the simulated matrix-variate normal datasets using the estimated parameters of return series. } \label{fig:trx_add}
\end{figure*}

    \subsection{The sensory cheese data}\label{sec:appli.che}
We study the sensory cheese data where the number of observed samples $ N $ is smaller than the dimension of the vectorized data $d = cr $. This data comes from the sensory evaluation of cream cheese, as detailed in \cite{sensorycheese2002}. Specifically, the study involves sensory tests conducted by eight professional evaluators who assessed ten products in three replicates, covering 23 sensory variables in each test, resulting in typical matrix-structured data. Therefore, we have $ N = 30 $, $ c = 8 $, and $ r = 23 $. For more details on the dataset, please refer to \citep{bro2008-multiway-sensory}. Note that the DD plot and the two-phase approach are not feasible when $N<d$. 

\reff{fig:cheese} shows the MHealy plot of the sensory data, which reveals a slight deviation between the scatter points and the reference line, indicating a small degree of variability. This observation may suggest a mild departure from the matrix-variate normal distribution.

\begin{figure*}[htb]
	\centering 
	\scalebox{0.7}[0.65]{\includegraphics*{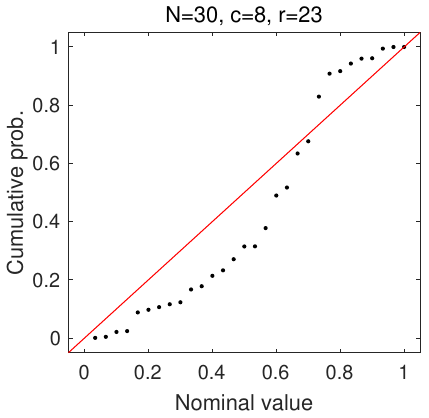}}
	\caption{ MHealy plot for the sensory cheese data indicates the lack of a matrix-variate normal structure. } \label{fig:cheese}
\end{figure*}

\section{Conclusion and future works}\label{sec:conc} 

This paper introduces the Matrix Healy (MHealy) plot, a novel graphical method for matrix-variate normality testing. By leveraging the relationship between cumulative probabilities and nominal values, the MHealy plot based on matrix-based MSD offers an innovative visualization tool with distinct advantages over existing methods, particularly in cases where the sample size is smaller than the dimension of the vectorized data. The proposed method has been theoretically proven to provide a more accurate and reliable framework for analysis.
Empirical validation demonstrates the MHealy plot's accuracy and practicality in assessing the conformity of matrix-valued data to the matrix-variate normal distribution.

Given that real-world datasets often display heavy tails, asymmetry, and skewness, future research will explore the applicability of the MHealy plot to matrix-variate data with these characteristics. In the long term,  this method will be extended to tensor-variate data, addressing the challenges of high-dimensional datasets beyond three dimensions and meeting the analytical demands of advanced data types \citep{kossaifi2020tensor,llosa2022reduced}.

\begin{appendices}

\section{The Proof for Proposition 1   }\label{m.Etau}
	\begin{proof} \label{app:proofs}	
		 For simplicity, let us assume $c=r$. Due to the error of each component $\hat{\mu}_{j}-\mu_{j}=O_p(N^{-\frac{1}{2}})$\citep{van2000asymptotic}, and the presence of $cr$ components, therefore,
       $||\bwmu-\bmu||^2_{2}=(\bwmu-\bmu)'(\bwmu-\bmu)=\sum_{j=1}^{cr}(\hat{\mu}_{j}-\mu_{j})^2=O_p(\frac{cr} {N})$\citep{mehta2004random,wainwright2019high}, Furthermore,
       \begin{IEEEeqnarray}{rCl}
       ||\bwmu-\bmu||_2=O_p(\frac{\sqrt{cr}}{\sqrt{N}})  \label{eqn:par.mu}.
       \end{IEEEeqnarray}
       
      Similarly, it can be obtained that 
       \begin{IEEEeqnarray}{rCl}
       ||\bwSig-\bSig||_F&=&O_p(\frac{cr}{\sqrt{N}}),\label{eqn:par.sig} \\
       ||\bwM-\bM||_F&=&O_p(\frac{\sqrt{cr}}{\sqrt{N}}),\label{eqn:par.M} \\
       ||\bwSig_c-\bSig_c||_F&=&O_p(\frac{c}{\sqrt{Nr}}),\label{eqn:par.sig_c}\\
       ||\bwSig_r-\bSig_r||_F&=&O_p(\frac{r}{\sqrt{Nc}})\label{eqn:par.sig_r}. 
       \end{IEEEeqnarray}
       Perform first-Order approximation \citep{wainwright2019high,meyer2023matrix} for $\bwSig^{-1}$, $\bwSig_c^{-1}$ and $\bwSig_r^{-1}$, respectively, resulting in the following expressions:
       \begin{IEEEeqnarray}{rCl}
        \bwSig^{-1} &\approx& \bSig^{-1}- \bSig^{-1} E_{\bwSig}\bSig^{-1}, \label{eqn:par.Isig} \\
        \bwSig_c^{-1} &\approx& \bSig_c^{-1}- \bSig_c^{-1} E_{\bwSig_c}\bSig_c^{-1}, \label{eqn:par.Isig_c} \\
        \bwSig_r^{-1} &\approx& \bSig_r^{-1}- \bSig_r^{-1} E_{\bwSig_r}\bSig_r^{-1} , \label{eqn:par.Isig_r}
       \end{IEEEeqnarray}
	 where $E_{\bwSig}=\bwSig-\bSig$, $E_{\bwSig_c}=\bwSig_c-\bSig_c$, $E_{\bwSig_r}=\bwSig_r-\bSig_r$, and
      $||E_{\bwSig}||_F=O_p(\frac{cr}{\sqrt{N}})$, $||E_{\bwSig_c}||_F =O_p(\frac{c}{\sqrt{Nr}})$,
       and $||E_{\bwSig_r}||_F =O_p(\frac{r}{\sqrt{Nc}})$. 

Thus, expanding $\Delta(\vc(\bX_n),\bwmu,\bwSig)$, and substituting  \refe{eqn:par.Isig} into \refe{eqn:mnmsd1}, we can derive

\begin{IEEEeqnarray}{rCl}
		\nonumber
		&& \Delta(\vc(\bX_n),\bwmu,\bwSig) \\
		\nonumber
		&=& \left(\vc(\bX_n)-\bwmu \right)^{\top}\bwSig^{-1}\left(\vc(\bX_n)-\bwmu\right)\\
		\nonumber
		&=& \left(\vc(\bX_n)-\bmu-(\bwmu-\bmu)\right)^{\top}\bwSig^{-1}\left(\vc(\bX_n)-\bmu-(\bwmu-\bmu)\right)\label{eqn:mnmsd1}\\
		&=& (\vc(\bX_n)-\bmu)^{\top}\bwSig^{-1}(\vc(\bX_n)-\bmu)-2(\vc(\bX_n)-\bmu)^{\top}\bwSig^{-1}(\bwmu-\bmu) +\> (\bwmu-\bmu)^{\top}\bwSig^{-1}(\bwmu-\bmu)  \label{eqn:mnmsd2}\\
		\nonumber
         &=& (\vc(\bX_n)-\bmu)^{\top}\left(\bSig^{-1}- \bSig^{-1} E_{\bwSig}\bSig^{-1}\right)(\vc(\bX_n)-\bmu)-2(\vc(\bX_n)-\bmu)^{\top}\left(\bSig^{-1}- \bSig^{-1} E_{\bwSig}\bSig^{-1}\right)(\bwmu-\bmu)\\
        && +\> (\bwmu-\bmu)^{\top}\left(\bSig^{-1}- \bSig^{-1} E_{\bwSig}\bSig^{-1}\right)(\bwmu-\bmu) \label{eqn:mnmsd3}.
  \end{IEEEeqnarray}

Ignoring second-order term,  \refe{eqn:mnmsd3} can be approximated
\begin{IEEEeqnarray}{rCl}
	\nonumber
	&& \Delta(\vc(\bX_n),\bwmu,\bwSig) \\
 \nonumber
  &\approx & (\vc(\bX_n)-\bmu)^{\top}\left(\bSig^{-1}- \bSig^{-1} E_{\bwSig}\bSig^{-1}\right)(\vc(\bX_n)-\bmu)-2(\vc(\bX_n)-\bmu)^{\top}\left(\bSig^{-1}- \bSig^{-1} E_{\bwSig}\bSig^{-1}\right)(\bwmu-\bmu)    \\
  \nonumber
  &=& (\vc(\bX_n)-\bmu)^{\top}\bSig^{-1}(\vc(\bX_n)-\bmu)- (\vc(\bX_n)-\bmu)^{\top}\bSig^{-1} E_{\bwSig}\bSig^{-1}(\vc(\bX_n)-\bmu)\\
  \nonumber
   &&-2(\vc(\bX_n)-\bmu)^{\top}\bSig^{-1}(\bwmu-\bmu)+2(\vc(\bX_n)-\bmu)^{\top}\bSig^{-1} E_{\bwSig}\bSig^{-1}(\bwmu-\bmu)   \\
   \nonumber
  &=&  \Delta(\vc(\bX_n),\bmu,\bSig)
         +O_p\left(\frac{cr}{\sqrt{N}} \right)+O_p\left(\frac{\sqrt{cr}}{\sqrt{N}}\right)  +O_p\left(\frac{(cr)^{\frac{3}{2}}}{N}\right) \\
 &\approx &  \Delta(\vc(\bX_n),\bmu,\bSig)+O_p\left(\frac{cr}{\sqrt{N}} \right)\label{eqn:mnmsdapp1}\\
 \nonumber
&=&  \Delta(\vc(\bX_n),\bmu,\bSig)+O_p\left(\frac{c^2}{\sqrt{N}} \right),
\end{IEEEeqnarray}
where when $N$ is sufficiently large, \refe{eqn:mnmsdapp1} holds.

Similarly, expanding $\Delta_{M}(\bX_n,\bwM,\bwSig_c,\bwSig_r)$,  substituting \refe{eqn:par.Isig_c} and \refe{eqn:par.Isig_r} into \refe{eqn:mvnmsd1},  we can get
	\begin{IEEEeqnarray}{rCl}
		\nonumber
		&& \Delta_{M}(\bX_n,\bwM,\bwSig_c,\bwSig_r) \\
		\nonumber
		& =& \tr\left\{\bwSig_c^{-1}(\bX_n-\bwM)\bwSig_r^{-1}(\bX_n-\bwM)^{\top}\right\} \\
		\nonumber
		& =&  \tr\left\{ \bwSig_c^{-1}\left((\bX_n-\bM)-(\bwM-\bM)\right)\bwSig_r^{-1}\left((\bX_n-\bM)-(\bwM-\bM)\right)^{\top}\right\}\\
        \nonumber
		& =& \tr\left\{\bwSig_c^{-1}(\bX_n-\bM)\bwSig_r^{-1}(\bX_n-\bM)^{\top}\right\}-2\tr\left\{ \bwSig_c^{-1}(\bwM-\bM)\bwSig_r^{-1}(\bX_n-\bM)^{\top}\right\}\\
            && +\>\tr\left\{\bwSig_c^{-1}(\bwM-\bM)\bwSig_r^{-1}(\bwM-\bM)^{\top}\right\} \label{eqn:mvnmsd1}\\
         \nonumber
		& =& \tr\left\{\left(\bSig_c^{-1}- \bSig_c^{-1} E_{\bwSig_c}\bSig_c^{-1} \right) \left(\bX_n-\bM \right)\left(\bSig_r^{-1}- \bSig_r^{-1} E_{\bwSig_r}\bSig_r^{-1}\right)(\bX_n-\bM)^{\top}\right\}\\
        \nonumber
         &&-\> 2\tr\left\{ \left(\bSig_c^{-1}- \bSig_c^{-1} E_{\bwSig_c}\bSig_c^{-1}\right)(\bwM-\bM)\left(\bSig_r^{-1}- \bSig_r^{-1} E_{\bwSig_r}\bSig_r^{-1}\right)(\bX_n-\bM)^{\top}\right\}\\
            && +\>\tr\left\{\left(\bSig_c^{-1}- \bSig_c^{-1} E_{\bwSig_c}\bSig_c^{-1}\right)(\bwM-\bM)\left(\bSig_r^{-1}- \bSig_r^{-1} E_{\bwSig_r}\bSig_r^{-1}\right)(\bwM-\bM)^{\top}\right\}  \label{eqn:mvnmsd2} 
\end{IEEEeqnarray}

Ignoring second-order term,  \refe{eqn:mvnmsd2} can be approximated
\begin{IEEEeqnarray}{rCl}
    \nonumber
	&& \Delta_{M}(\bX_n,\bwM,\bwSig_c,\bwSig_r) \\
    \nonumber
  &\approx & \tr\left\{\left(\bSig_c^{-1}- \bSig_c^{-1} E_{\bwSig_c}\bSig_c^{-1} \right) \left(\bX_n-\bM \right)\left(\bSig_r^{-1}- \bSig_r^{-1} E_{\bwSig_r}\bSig_r^{-1}\right)(\bX_n-\bM)^{\top}\right\}\\
        \nonumber
         &&-\> 2\tr\left\{ \left(\bSig_c^{-1}- \bSig_c^{-1} E_{\bwSig_c}\bSig_c^{-1}\right)(\bwM-\bM)\left(\bSig_r^{-1}- \bSig_r^{-1} E_{\bwSig_r}\bSig_r^{-1}\right)(\bX_n-\bM)^{\top}\right\} \\
  \nonumber
  & =& \tr\left\{\bSig_c^{-1} \left(\bX_n-\bM \right)\bSig_r^{-1}(\bX_n-\bM)^{\top}\right\}
    -\tr\left\{\bSig_c^{-1} \left(\bX_n-\bM \right)\bSig_r^{-1} E_{\bwSig_r}\bSig_r^{-1}(\bX_n-\bM)^{\top}\right\}\\
     \nonumber
    &&-\>\tr\left\{ \bSig_c^{-1} E_{\bwSig_c}\bSig_c^{-1}  \left(\bX_n-\bM \right)\bSig_r^{-1}(\bX_n-\bM)^{\top}\right\}
    +\tr\left\{ \bSig_c^{-1} E_{\bwSig_c}\bSig_c^{-1}  \left(\bX_n-\bM \right)\bSig_r^{-1} E_{\bwSig_r}\bSig_r^{-1}(\bX_n-\bM)^{\top}\right\}\\
        \nonumber
         &&-\> 2\tr\left\{ \bSig_c^{-1}(\bwM-\bM)\bSig_r^{-1}(\bX_n-\bM)^{\top}\right\} 
         +2\tr\left\{ \bSig_c^{-1}(\bwM-\bM)\bSig_r^{-1} E_{\bwSig_r}\bSig_r^{-1}(\bX_n-\bM)^{\top}\right\} \\
         \nonumber
         &&+\> 2\tr\left\{ \bSig_c^{-1} E_{\bwSig_c}\bSig_c^{-1}(\bwM-\bM)\bSig_r^{-1}(\bX_n-\bM)^{\top}\right\}
         -2\tr\left\{ \bSig_c^{-1} E_{\bwSig_c}\bSig_c^{-1}(\bwM-\bM)\bSig_r^{-1} E_{\bwSig_r}\bSig_r^{-1}(\bX_n-\bM)^{\top}\right\}\\
    \nonumber
    & =&  \Delta_{M}(\bX_n,\bM, \bSig_c, \bSig_r)
   + O_p\left(\frac{r}{\sqrt{Nc}}\right)+O_p\left(\frac{c}{\sqrt{Nr}}\right)+O_p\left(\frac{\sqrt{cr}}{N}\right) \\
 \nonumber
  && +\> O_p\left(\frac{\sqrt{cr}}{\sqrt{N}}\right) +O_p\left(\frac{r^{\frac{3}{2}}}{N}\right) +O_p\left(\frac{c^{\frac{3}{2}}}{N}\right)+O_p\left(\frac{cr}{N^{\frac{3}{2}}}\right) \\ 
   & \approx&  \Delta_{M}(\bX_n,\bM, \bSig_c, \bSig_r)+O_p\left(\frac{\sqrt{cr}}{\sqrt{N}}\right)\label{eqn:mvnmsdapp1}\\
 & =&  \Delta_{M}(\bX_n,\bM, \bSig_c, \bSig_r)+O_p\left(\frac{c}{\sqrt{N}}\right),  
\end{IEEEeqnarray}
where when $N$ is sufficiently large, \refe{eqn:mvnmsdapp1} holds.
The proof is concluded.  
\end{proof}
\end{appendices}
\bibliography{journall,jhzhao-pub,lit}

\begin{thebibliography}{33}
\expandafter\ifx\csname natexlab\endcsname\relax\def\natexlab#1{#1}\fi
\providecommand{\url}[1]{\texttt{#1}}
\providecommand{\href}[2]{#2}
\providecommand{\path}[1]{#1}
\providecommand{\DOIprefix}{doi:}
\providecommand{\ArXivprefix}{arXiv:}
\providecommand{\URLprefix}{URL: }
\providecommand{\Pubmedprefix}{pmid:}
\providecommand{\doi}[1]{\href{http://dx.doi.org/#1}{\path{#1}}}
\providecommand{\Pubmed}[1]{\href{pmid:#1}{\path{#1}}}
\providecommand{\bibinfo}[2]{#2}
\ifx\xfnm\relax \def\xfnm[#1]{\unskip,\space#1}\fi
%Type = Article
\bibitem[{Anderson and Olkin(1985)}]{anderson1985maximum}
\bibinfo{author}{Anderson, T.W.}, \bibinfo{author}{Olkin, I.},
  \bibinfo{year}{1985}.
\newblock \bibinfo{title}{Maximum-likelihood estimation of the parameters of a
  multivariate normal distribution}.
\newblock \bibinfo{journal}{Linear algebra and its applications}
  \bibinfo{volume}{70}, \bibinfo{pages}{147--171}.
%Type = Article
\bibitem[{Bro et~al.(2008)Bro, Qannari, Kiers, N{\ae}s and
  Fr{\o}st}]{bro2008-multiway-sensory}
\bibinfo{author}{Bro, R.}, \bibinfo{author}{Qannari, E.M.},
  \bibinfo{author}{Kiers, H.A.L.}, \bibinfo{author}{N{\ae}s, T.},
  \bibinfo{author}{Fr{\o}st, M.B.}, \bibinfo{year}{2008}.
\newblock \bibinfo{title}{Multi-way models for sensory profiling data}.
\newblock \bibinfo{journal}{Journal of Chemometrics} \bibinfo{volume}{22},
  \bibinfo{pages}{36--45}.
%Type = Article
\bibitem[{Chen and Fan(2023)}]{chen2023statistical}
\bibinfo{author}{Chen, E.Y.}, \bibinfo{author}{Fan, J.}, \bibinfo{year}{2023}.
\newblock \bibinfo{title}{Statistical inference for high-dimensional
  matrix-variate factor models}.
\newblock \bibinfo{journal}{Journal of the American Statistical Association}
  \bibinfo{volume}{118}, \bibinfo{pages}{1038--1055}.
%Type = Article
\bibitem[{Dutilleul(1999)}]{dutilleul1999mle}
\bibinfo{author}{Dutilleul, P.}, \bibinfo{year}{1999}.
\newblock \bibinfo{title}{The {MLE} algorithm for the matrix normal
  distribution}.
\newblock \bibinfo{journal}{Journal of Statistical Computation and Simulation}
  \bibinfo{volume}{64}, \bibinfo{pages}{105--123}.
%Type = Article
\bibitem[{Easton and McCulloch(1990)}]{easton1990multivariate}
\bibinfo{author}{Easton, G.S.}, \bibinfo{author}{McCulloch, R.E.},
  \bibinfo{year}{1990}.
\newblock \bibinfo{title}{A multivariate generalization of quantile-quantile
  plots}.
\newblock \bibinfo{journal}{Journal of the American Statistical Association}
  \bibinfo{volume}{85}, \bibinfo{pages}{376--386}.
%Type = Article
\bibitem[{Fosdick and Hoff(2014)}]{fosdick2014separable}
\bibinfo{author}{Fosdick, B.K.}, \bibinfo{author}{Hoff, P.D.},
  \bibinfo{year}{2014}.
\newblock \bibinfo{title}{Separable factor analysis with applications to
  mortality data}.
\newblock \bibinfo{journal}{The Annals of Applied Statistics}
  \bibinfo{volume}{8}, \bibinfo{pages}{120}.
%Type = Phdthesis
\bibitem[{Fr{\o}st(2002)}]{sensorycheese2002}
\bibinfo{author}{Fr{\o}st, M.B.}, \bibinfo{year}{2002}.
\newblock \bibinfo{title}{The influence of fat content on sensory properties
  and consumer perception of dairy productest}.
\newblock \bibinfo{type}{Ph.{D}. thesis}. The Royal Veterinary and Agricultural
  University.
%Type = Article
\bibitem[{Gallaugher and McNicholas(2017)}]{gallaugher2017matrix}
\bibinfo{author}{Gallaugher, M.P.}, \bibinfo{author}{McNicholas, P.D.},
  \bibinfo{year}{2017}.
\newblock \bibinfo{title}{A matrix variate skew-t distribution}.
\newblock \bibinfo{journal}{Stat} \bibinfo{volume}{6},
  \bibinfo{pages}{160--170}.
%Type = Article
\bibitem[{Gallaugher and McNicholas(2018)}]{gallaugher2018finite}
\bibinfo{author}{Gallaugher, M.P.}, \bibinfo{author}{McNicholas, P.D.},
  \bibinfo{year}{2018}.
\newblock \bibinfo{title}{Finite mixtures of skewed matrix variate
  distributions}.
\newblock \bibinfo{journal}{Pattern Recognition} \bibinfo{volume}{80},
  \bibinfo{pages}{83--93}.
%Type = Article
\bibitem[{Gallaugher and McNicholas(2024)}]{gallaugher2024clustering}
\bibinfo{author}{Gallaugher, M.P.}, \bibinfo{author}{McNicholas, P.D.},
  \bibinfo{year}{2024}.
\newblock \bibinfo{title}{Clustering and semi-supervised classification for
  clickstream data via mixture models}.
\newblock \bibinfo{journal}{Canadian Journal of Statistics}
  \bibinfo{volume}{52}, \bibinfo{pages}{678--695}.
%Type = Book
\bibitem[{Gupta and Nagar(1999)}]{gupta1999matrix}
\bibinfo{author}{Gupta, A.K.}, \bibinfo{author}{Nagar, D.K.},
  \bibinfo{year}{1999}.
\newblock \bibinfo{title}{Matrix Variate Distributions}.
\newblock \bibinfo{publisher}{Chapman and Hall/CRC}.
%Type = Article
\bibitem[{Healy(1968)}]{healy1968multivariate}
\bibinfo{author}{Healy, M.}, \bibinfo{year}{1968}.
\newblock \bibinfo{title}{Multivariate normal plotting}.
\newblock \bibinfo{journal}{Journal of the Royal Statistical Society Series C:
  Applied Statistics} \bibinfo{volume}{17}, \bibinfo{pages}{157--161}.
%Type = Article
\bibitem[{Kim and Park(2018)}]{kim2018likelihood}
\bibinfo{author}{Kim, I.}, \bibinfo{author}{Park, S.}, \bibinfo{year}{2018}.
\newblock \bibinfo{title}{Likelihood ratio tests for multivariate normality}.
\newblock \bibinfo{journal}{Communications in statistics-Theory and Methods}
  \bibinfo{volume}{47}, \bibinfo{pages}{1923--1934}.
%Type = Article
\bibitem[{Kong et~al.(2020)Kong, An, Zhang and Zhu}]{kong2020l2rm}
\bibinfo{author}{Kong, D.}, \bibinfo{author}{An, B.}, \bibinfo{author}{Zhang,
  J.}, \bibinfo{author}{Zhu, H.}, \bibinfo{year}{2020}.
\newblock \bibinfo{title}{L2rm: Low-rank linear regression models for
  high-dimensional matrix responses}.
\newblock \bibinfo{journal}{Journal of the American Statistical Association} .
%Type = Article
\bibitem[{Korkmaz et~al.(2014)Korkmaz, G{\"o}ks{\"u}l{\"u}k and
  Zararsiz}]{korkmaz2014mvn}
\bibinfo{author}{Korkmaz, S.}, \bibinfo{author}{G{\"o}ks{\"u}l{\"u}k, D.},
  \bibinfo{author}{Zararsiz, G.}, \bibinfo{year}{2014}.
\newblock \bibinfo{title}{{MVN}: An {R} package for assessing multivariate
  normality}.
\newblock \bibinfo{journal}{R JOURNAL} \bibinfo{volume}{6}.
%Type = Article
\bibitem[{Kossaifi et~al.(2020)Kossaifi, Lipton, Kolbeinsson, Khanna,
  Furlanello and Anandkumar}]{kossaifi2020tensor}
\bibinfo{author}{Kossaifi, J.}, \bibinfo{author}{Lipton, Z.C.},
  \bibinfo{author}{Kolbeinsson, A.}, \bibinfo{author}{Khanna, A.},
  \bibinfo{author}{Furlanello, T.}, \bibinfo{author}{Anandkumar, A.},
  \bibinfo{year}{2020}.
\newblock \bibinfo{title}{Tensor regression networks}.
\newblock \bibinfo{journal}{Journal of Machine Learning Research}
  \bibinfo{volume}{21}, \bibinfo{pages}{1--21}.
%Type = Article
\bibitem[{Lin et~al.(2015)Lin, Wu, McLachlan and Lee}]{lin2015robust}
\bibinfo{author}{Lin, t.I.}, \bibinfo{author}{Wu, P.H.},
  \bibinfo{author}{McLachlan, G.J.}, \bibinfo{author}{Lee, S.X.},
  \bibinfo{year}{2015}.
\newblock \bibinfo{title}{A robust factor analysis model using the restricted
  skew-t distribution}.
\newblock \bibinfo{journal}{Test} \bibinfo{volume}{24},
  \bibinfo{pages}{510--531}.
%Type = Article
\bibitem[{Llosa-Vite and Maitra(2022)}]{llosa2022reduced}
\bibinfo{author}{Llosa-Vite, C.}, \bibinfo{author}{Maitra, R.},
  \bibinfo{year}{2022}.
\newblock \bibinfo{title}{Reduced-rank tensor-on-tensor regression and
  tensor-variate analysis of variance}.
\newblock \bibinfo{journal}{IEEE Transactions on Pattern Analysis and Machine
  Intelligence} \bibinfo{volume}{45}, \bibinfo{pages}{2282--2296}.
%Type = Article
\bibitem[{Lu and Zimmerman(2005)}]{lu2005likelihood}
\bibinfo{author}{Lu, N.}, \bibinfo{author}{Zimmerman, D.L.},
  \bibinfo{year}{2005}.
\newblock \bibinfo{title}{The likelihood ratio test for a separable covariance
  matrix}.
\newblock \bibinfo{journal}{Statistics \& probability letters}
  \bibinfo{volume}{73}, \bibinfo{pages}{449--457}.
%Type = Article
\bibitem[{Ma et~al.(2023)Ma, Zhao, Wang, Shang and Jiang}]{zhao2023-rfpca}
\bibinfo{author}{Ma, X.}, \bibinfo{author}{Zhao, J.}, \bibinfo{author}{Wang,
  Y.}, \bibinfo{author}{Shang, C.}, \bibinfo{author}{Jiang, F.},
  \bibinfo{year}{2023}.
\newblock \bibinfo{title}{Robust factored principal component analysis for
  matrix-valued outlier accommodation and detection}.
\newblock \bibinfo{journal}{Computational Statistics and Data Analysis}
  \bibinfo{volume}{179}, \bibinfo{pages}{107657}.
\newblock \DOIprefix\doi{10.1016/j.csda.2022.107657}.
%Type = Book
\bibitem[{Mardia et~al.(1979)Mardia, Kent and Bibby}]{mardia1979multivariate}
\bibinfo{author}{Mardia, K.V.}, \bibinfo{author}{Kent, J.T.},
  \bibinfo{author}{Bibby, J.M.}, \bibinfo{year}{1979}.
\newblock \bibinfo{title}{Multivariate Statistics}.
\newblock \bibinfo{publisher}{London:Academic Press}.
%Type = Book
\bibitem[{Mehta(2004)}]{mehta2004random}
\bibinfo{author}{Mehta, M.L.}, \bibinfo{year}{2004}.
\newblock \bibinfo{title}{Random matrices}.
\newblock \bibinfo{publisher}{Elsevier}.
%Type = Book
\bibitem[{Meyer(2023)}]{meyer2023matrix}
\bibinfo{author}{Meyer, C.D.}, \bibinfo{year}{2023}.
\newblock \bibinfo{title}{Matrix analysis and applied linear algebra}.
\newblock \bibinfo{publisher}{SIAM}.
%Type = Article
\bibitem[{Po{\v{c}}u{\v{c}}a et~al.(2023)Po{\v{c}}u{\v{c}}a, Gallaugher, Clark
  and McNicholas}]{povcuvca2023visual}
\bibinfo{author}{Po{\v{c}}u{\v{c}}a, N.}, \bibinfo{author}{Gallaugher, M.P.},
  \bibinfo{author}{Clark, K.M.}, \bibinfo{author}{McNicholas, P.D.},
  \bibinfo{year}{2023}.
\newblock \bibinfo{title}{Visual assessment of matrix-variate normality}.
\newblock \bibinfo{journal}{Australian \& New Zealand Journal of Statistics}
  \bibinfo{volume}{65}, \bibinfo{pages}{152--165}.
%Type = Article
\bibitem[{Ramzan et~al.(2013)Ramzan, Zahid and Ramzan}]{ramzan2013evaluating}
\bibinfo{author}{Ramzan, S.}, \bibinfo{author}{Zahid, F.M.},
  \bibinfo{author}{Ramzan, S.}, \bibinfo{year}{2013}.
\newblock \bibinfo{title}{Evaluating multivariate normality: A graphical
  approach}.
\newblock \bibinfo{journal}{Middle East Journal of Scientific Research}
  \bibinfo{volume}{13}, \bibinfo{pages}{254--263}.
%Type = Article
\bibitem[{Shang et~al.(2024)Shang, Xuan, Fen and Jianhua}]{shang2024bfa}
\bibinfo{author}{Shang, C.}, \bibinfo{author}{Xuan, M.}, \bibinfo{author}{Fen,
  J.}, \bibinfo{author}{Jianhua, Z.}, \bibinfo{year}{2024}.
\newblock \bibinfo{title}{Bilinear factor analysis(in chinese)}.
\newblock \bibinfo{journal}{Journal of Systems Science and Mathematical
  Sciences} , \bibinfo{pages}{1159--1188}.
%Type = Article
\bibitem[{Tomarchio et~al.(2022)Tomarchio, Gallaugher, Punzo and
  McNicholas}]{tomarchio2022mixtures}
\bibinfo{author}{Tomarchio, S.D.}, \bibinfo{author}{Gallaugher, M.P.},
  \bibinfo{author}{Punzo, A.}, \bibinfo{author}{McNicholas, P.D.},
  \bibinfo{year}{2022}.
\newblock \bibinfo{title}{Mixtures of matrix-variate contaminated normal
  distributions}.
\newblock \bibinfo{journal}{Journal of Computational and Graphical Statistics}
  \bibinfo{volume}{31}, \bibinfo{pages}{413--421}.
%Type = Book
\bibitem[{Van~der Vaart(2000)}]{van2000asymptotic}
\bibinfo{author}{Van~der Vaart, A.W.}, \bibinfo{year}{2000}.
\newblock \bibinfo{title}{Asymptotic Statistics}. volume~\bibinfo{volume}{3}.
\newblock \bibinfo{publisher}{Cambridge university press}.
%Type = Book
\bibitem[{Wainwright(2019)}]{wainwright2019high}
\bibinfo{author}{Wainwright, M.J.}, \bibinfo{year}{2019}.
\newblock \bibinfo{title}{High-dimensional statistics: A non-asymptotic
  viewpoint}. volume~\bibinfo{volume}{48}.
\newblock \bibinfo{publisher}{Cambridge university press}.
%Type = Article
\bibitem[{Wang et~al.(2019)Wang, Liu and Chen}]{wang2019factor}
\bibinfo{author}{Wang, D.}, \bibinfo{author}{Liu, X.}, \bibinfo{author}{Chen,
  R.}, \bibinfo{year}{2019}.
\newblock \bibinfo{title}{Factor models for matrix-valued high-dimensional time
  series}.
\newblock \bibinfo{journal}{Journal of Econometrics} \bibinfo{volume}{208},
  \bibinfo{pages}{231--248}.
%Type = Article
\bibitem[{Zhang et~al.(2024)Zhang, Liu, Guo, Yuen and
  Welsh}]{zhang2024modeling}
\bibinfo{author}{Zhang, X.}, \bibinfo{author}{Liu, C.C.}, \bibinfo{author}{Guo,
  J.}, \bibinfo{author}{Yuen, K.}, \bibinfo{author}{Welsh, A.},
  \bibinfo{year}{2024}.
\newblock \bibinfo{title}{Modeling and learning on high-dimensional
  matrix-variate sequences}.
\newblock \bibinfo{journal}{Journal of the American Statistical Association} ,
  \bibinfo{pages}{1--16}.
%Type = Article
\bibitem[{Zhang et~al.(2025)Zhang, Murphy and McNicholas}]{zhang2025balanced}
\bibinfo{author}{Zhang, X.}, \bibinfo{author}{Murphy, O.A.},
  \bibinfo{author}{McNicholas, P.D.}, \bibinfo{year}{2025}.
\newblock \bibinfo{title}{Balanced longitudinal data clustering with a copula
  kernel mixture model}.
\newblock \bibinfo{journal}{Canadian Journal of Statistics} ,
  \bibinfo{pages}{e11838}.
%Type = Article
\bibitem[{Zhang et~al.(2023)Zhang, Shen and Kong}]{zhang2023covariance}
\bibinfo{author}{Zhang, Y.}, \bibinfo{author}{Shen, W.}, \bibinfo{author}{Kong,
  D.}, \bibinfo{year}{2023}.
\newblock \bibinfo{title}{Covariance estimation for matrix-valued data}.
\newblock \bibinfo{journal}{Journal of the American Statistical Association}
  \bibinfo{volume}{118}, \bibinfo{pages}{2620--2631}.

\end{thebibliography}
\bibliographystyle{elsarticle-harv}
	
\end{document}